\DeclareMathOperator{\E}{\mathbb{E}}
\DeclareMathOperator{\dist}{dist}
\DeclareMathOperator{\rad}{rad}
\DeclareMathOperator{\diam}{diam}
\renewcommand{\Pr}[2][]{\mathbb{P}_{#1} \left\{ #2 \rule{0mm}{3mm}\right\}}
\newcommand{\ip}[2]{\left\langle#1,#2\right\rangle}
\newcommand{\gw}[1]{\omega\left(#1\right)}
\def \CC {\mathcal{C}}
\def \EE {\mathcal{E}}
\def \NN {\mathcal{N}}
\def \TT {\mathcal{T}}
\def \B {\mathbb{B}}
\def \R {\mathbb{R}}
\def \S {\mathbb{S}}
\def \va {\bm{a}}
\def \vb {\bm{b}}
\def \vd {\bm{d}}
\def \vg {\bm{g}}
\def \vh {\bm{h}}
\def \vs {\bm{s}}
\def \vu {\bm{u}}
\def \vv {\bm{v}}
\def \vw {\bm{w}}
\def \vx {\bm{x}}
\def \vy {\bm{y}}
\def \vz {\bm{z}}
\def \vzero {\bm{0}}
\def \mA {\bm{A}}
\def \mI {\bm{I}}
\newtheorem{theorem}{Theorem}
\newtheorem{lemma}{Lemma}
\newtheorem{proposition}{Proposition}
\newtheorem{assumption}{Assumption}
\theoremstyle{definition}
\theoremstyle{remark}
\newtheorem*{rep@theorem}{\rep@title}
\newcommand{\newreptheorem}[2]{%
	\newenvironment{rep#1}[1]{%
		\def\rep@title{#2 \ref{##1}}%
		\begin{rep@theorem}}%
		{\end{rep@theorem}}}
\begin{document}
%
\title{Corrupted Sensing with Sub-Gaussian Measurements}
%
%
\author{
\IEEEauthorblockN{Jinchi Chen\IEEEauthorrefmark{1} and Yulong~Liu\IEEEauthorrefmark{1}}

\IEEEauthorblockA{\IEEEauthorrefmark{1}School of Physics, Beijing Institute of Technology, Beijing 100081, China}

\thanks{This work was supported by the National Natural Science Foundation of China under Grant 61301188.}
}

%

\maketitle

\pagestyle{empty}  
\thispagestyle{empty} 

\begin{abstract}
This paper studies the problem of accurately recovering a structured signal from a small number of corrupted sub-Gaussian measurements. We consider three different procedures to reconstruct signal and corruption when different kinds of prior knowledge are available. In each case, we provide conditions for stable signal recovery from structured corruption with added unstructured noise. The key ingredient in our analysis is an extended matrix deviation inequality for isotropic sub-Gaussian matrices.
\end{abstract}

\begin{IEEEkeywords}
	Corrupted sensing, compressed sensing, signal separation, sub-Gaussian, Gaussian width, extended matrix deviation inequality.
\end{IEEEkeywords}

\section{Introduction}
Corrupted sensing concerns the problem of recovering a structured signal from a relatively small number of corrupted measurements
\begin{align}\label{model: observe}
\vy = \bm{\Phi}\vx^{\star} +\vv^{\star} + \vz,
\end{align}
where $\bm{\Phi}\in\R^{m\times n}$ is the sensing matrix with $m \ll n$, $\vx^{\star}\in\R^n$ is the structured signal, $\vv^{\star}\in\R^m$ is the structured corruption, and $\vz \in \R^m$ is the unstructured observation noise. The goal is to estimate $\vx^{\star}$ and $\vv^{\star}$ from given knowledge of $\vy$ and $\bm{\Phi}$.

This problem has received increasing attention recently with many interesting practical applications as well as theoretical consideration. Examples of applications include face recognition \cite{wright2009robust}, subspace clustering \cite{elhamifar2009sparse}, sensor network \cite{haupt2008compressed}, and so on. Examples of theoretical guarantees include sparse signal recovery from sparse corruption \cite{wright2010dense,li2013compressed,nguyen2013exact,nguyen2013robust,pope2013probabilistic,studer2012recovery,studer2014stable} and structured signal recovery from structured corruption \cite{foygel2014corrupted}. It is worth noting that this model \eqref{model: observe} also includes the signal separation (or demixing) problem \cite{mccoy2014sharp} in which $\vv^{\star}$ might actually contain useful information and thus is necessary to be recovered. In particular, if there is no corruption $(\vv^{\star} = \vzero)$, this model \eqref{model: observe} reduces to the standard compressed sensing problem.

Since this problem is generally ill-posed, recovery is possible when both signal and corruption are suitably structured. Let $f(\cdot)$ and $g(\cdot)$ be suitable norms which promote structures for signal and corruption respectively. We consider three different convex optimization approaches to disentangle signal and corruption when different kinds of prior information are available. Specifically, when prior knowledge of either signal $f(\vx^{\star})$ or corruption $g(\vv^{\star})$ is available and the noise level $\delta$ (in terms of $\ell_2$-norm) is known, it is natural to consider the following constrained convex recovery procedures
\begin{align}\label{Constrained_Optimization_I}
\min_{\vx, \vv} ~f(\vx),\quad\text{s.t.~}&g(\vv) \leq g(\vv^{\star}), ~~\|\vy-\bm{\Phi}\vx-\vv\|_2\leq \delta
\end{align}
and
\begin{align}\label{Constrained_Optimization_II}
\min_{\vx, \vv} ~g(\vv),\quad\text{s.t.~}&f(\vx) \leq f(\vx^{\star}), ~~\|\vy-\bm{\Phi}\vx-\vv\|_2\leq \delta.
\end{align}
When only the noise level $\delta$ is known, it is convenient to use the partially penalized convex recovery procedure
\begin{align}\label{Partially_Penalized_Optimization}
\min_{\vx,\vv}~f(\vx)+\lambda\cdot g(\vv),\quad\text{s.t.}\quad\|\vy-\bm{\Phi}\vx-\vv\|_2\leq \delta,
\end{align}
where $\lambda > 0$ is a tradeoff parameter. When there is no prior knowledge available, it is practical to utilize the fully penalized convex recovery procedure
\begin{align}\label{Fully Penalized Optimization}
\min_{\vx,\vv}\frac{1}{2}\|\vy-\bm{\Phi}\vx-\vv\|_2^2+\tau_1\cdot f(\vx)+\tau_2\cdot g(\vv),
\end{align}
where $\tau_1,\tau_2 > 0$ are some tradeoff parameters.

This paper considers the problem of recovering a structured signal from corrupted sub-Gaussian measurements. The contribution of this paper is threefold:
\begin{itemize}
\item [(1):] First, we consider sub-Gaussian measurements in model \eqref{model: observe}. Specifically, we assume that each row $\bm{\Phi}_i$ of the sensing matrix $\bm{\Phi}$ is independent, centered, and sub-Gaussian random vector with
\begin{equation}
\label{model: subgaus property}
\|\bm{\Phi}_{i}\|_{\psi_2}\leq K/\sqrt{m}~~ \textrm{and} ~~ \E\bm{\Phi}_{i}^T\bm{\Phi}_{i} = \mI_n/m,
\end{equation}
where $\|\cdot\|_{\psi_2}$ denotes the sub-Gaussian norm and $\mI_n$ is the $n$-dimensional identity matrix.

\item [(2):] Second, the unstructured noise $\vz$ is assumed to be bounded $(\|\vz\|_2 \leq \delta)$ or be a random vector with independent centered sub-Gaussian entries satisfying
\begin{align} \label{model: noise}
\|\vz_i\|_{\psi_2} \leq L ~~\text{and}~~ \E\vz_i^2 = 1.
\end{align}

\item [(3):] Third, under the above conditions, we establish performance guarantees for all three convex recovery procedures.

\end{itemize}

It is worth noting that in \cite{mccoy2014sharp} only the constrained convex recovery procedures (\eqref{Constrained_Optimization_I} and \eqref{Constrained_Optimization_II}) were considered under random orthogonal measurements ($m=n$) and noise-free case ($\delta = 0$). In \cite{foygel2014corrupted}, both the constrained convex recovery procedures (\eqref{Constrained_Optimization_I} and \eqref{Constrained_Optimization_II}) and the partially penalized convex recovery procedure \eqref{Partially_Penalized_Optimization} were analyzed under Gaussian measurements and bounded noise case. The results in this paper solve a series of open problems in \cite{foygel2014corrupted} (e.g., allowing non-Gaussian measurements and stochastic unstructured noise in model \eqref{model: observe} and analyzing the fully penalized convex recovery procedure \eqref{Fully Penalized Optimization}).

\section{Preliminaries}
In this section, we review some preliminaries that underlie our analysis.

The \emph{subdifferential} of $f$ at $\vx$ is the set of vectors
\begin{equation*}
  \partial f(\vx) = \{ \vu \in \R^n: f(\vx + \vd) \geq f(\vx) + \langle \vu, \vd\rangle ~~ \textrm{for all} ~~\vd \in \R^n \}.
\end{equation*}
The \emph{tangent cone} of $f$ at $\vx$ is defined as the set of descent directions of $f$ at $\vx$
\begin{equation}\label{DefinitionofTangentCone}
    \mathcal{T}_f = \{\vu \in \R^n: f(\vx+t \cdot \vu) \le f(\vx)~~\textrm{for some}~~t>0\}.
\end{equation}

The \emph{Gaussian width} of a subset $\mathcal{C} \subset \R^n$ is defined as
\begin{align*}
\gw{\CC}=\E\sup_{\vu\in\CC}\ip{\vg}{\vu}, ~~\textrm{where}~~\vg\sim\NN(0,\mI_n).
\end{align*}
While the \emph{Gaussian complexity} for a subset $\mathcal{C} \subset \R^n$ is defined as
\begin{align*}
\gamma(\CC)=\E\sup_{\vu\in\CC}|\ip{\vg}{\vu}|, ~~\textrm{where}~~\vg\sim\NN(0,\mI_n).
\end{align*}
These two geometric quantities are closely related, in particular,
\begin{equation}\label{Relation}
\gamma(\mathcal{C}) \leq  2 w(\mathcal{C})+\|\vu\|_2~~~\textrm{for any point}~~~\vu \in \mathcal{C}.
\end{equation}
The \emph{Gaussian squared distance} $\eta^2(\CC)$ of a subset $\CC\subset\R^n$ is defined as
\begin{align*}
	\eta^2(\CC):=\E\inf_{\vu\in\CC}\|\vg-\vu\|_2^2,~~\textrm{where}~~\vg\sim\NN(0,\mI_n).
\end{align*}

A random variable $X$ is called a \emph{sub-Gaussian random variable} if the Orlicz norm
\begin{equation} \label{definitionofsubgaussian}
\|X\|_{\psi_2} = \inf \{t>0: \E \exp(X^2/t^2)\le 2\}
\end{equation}
is finite. The sub-Gaussian norm of $X$, denoted $||X||_{\psi_2}$, is defined to be the smallest $t$ in \eqref{definitionofsubgaussian}. A random vector $\vx \in \R^n$ is called a \emph{sub-Gaussian random vector} if all of its one-dimensional marginals are sub-Gaussian random variables and its $\psi_2$-norm is defined as
\begin{equation} \label{subgaussian vector}
\|\vx\|_{\psi_2}:=\sup_{\vy\in\S^{n-1}}\big\| \ip{\vx}{\vy} \big\|_{\psi_2}.
\end{equation}
A random vector $\vx \in \R^n$ is \emph{isotropic} if it satisfies $\E \vx \vx^T = \bm{I}_n$.

The key ingredient in our proofs is the following extended matrix deviation inequality which implies the extended restricted eigenvalue condition for the sub-Gaussian sensing matrix.
\begin{proposition}[Extended Matrix deviation inequality, \cite{chen2017Matrix}] \label{lm:MatrixDeviationInequality}
 Let $\mA$ be an ${m \times n}$ random matrix whose rows are independent, centered, isotropic, and sub-Gaussian random vectors with $\max_{i} \|\mA_i\|_{\psi_2} \leq K$. For any bounded subset $\mathcal{T} \subset \R^{n} \times \R^{m}$ and $t \ge 0$, the event
  \begin{multline}
  \sup_{ (\va,\vb)\in \mathcal{T} \cap \S^{n+m-1}  }\left| \|\mA\va+\sqrt{m}\vb\|_2 - \sqrt{m} \right| \\ \leq CK^2[ \gamma(\mathcal{T}\cap \S^{n+m-1}) + t]
  \end{multline}
  holds with probability at least $1-\exp\{-t^2\}$.

\end{proposition}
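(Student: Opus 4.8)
The plan is to run a generic‑chaining argument for the random process $Y_{\vu}:=\|\mA\va+\sqrt{m}\,\vb\|_2-\sqrt{m}$, indexed by $\vu=(\va,\vb)$ over the set $S:=\mathcal{T}\cap\S^{n+m-1}$, since the quantity to be controlled is precisely $\sup_{\vu\in S}|Y_{\vu}|$. It is convenient to absorb the corruption block into the augmented matrix $\bm{B}:=[\,\mA\mid\sqrt{m}\,\mI_m\,]\in\R^{m\times(n+m)}$, so that $\bm{B}\vu=\mA\va+\sqrt{m}\,\vb$ and $Y_{\vu}=\|\bm{B}\vu\|_2-\sqrt{m}\,\|\vu\|_2$ on $S$. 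One cannot simply invoke the classical matrix deviation inequality for $\bm{B}$: its $i$‑th row $(\mA_i,\sqrt{m}\,\ve_i)$ is neither centered nor isotropic --- its second‑moment matrix contains the rank‑one block $m\,\ve_i\ve_i^{T}$ --- and absorbing exactly this non‑isotropy is what the ``extended'' inequality has to do. Throughout we may assume $K\ge 1$, since the rows also satisfy $\|\mA_i\|_{\psi_2}\le\max\{K,1\}$.

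First I would record the pointwise behaviour of the norm. For $\vu=(\va,\vb)\in\S^{n+m-1}$ the vector $\bm{B}\vu=\mA\va+\sqrt{m}\,\vb$ has independent coordinates with mean $\sqrt{m}\,\vb_i$, variance $\|\va\|_2^2$, and sub‑gaussian norm $\lesssim K\|\va\|_2+\sqrt{m}\,|\vb_i|$, so $\E\|\bm{B}\vu\|_2^2=m(\|\va\|_2^2+\|\vb\|_2^2)=m$; a Bernstein estimate for $\|\bm{B}\vu\|_2^2-m$ followed by division by $\|\bm{B}\vu\|_2+\sqrt{m}\ge\sqrt{m}$ then yields $\|Y_{\vu}\|_{\psi_2}=\bigl\|\,\|\bm{B}\vu\|_2-\sqrt{m}\,\bigr\|_{\psi_2}\le CK^2$, and in particular $\|\bm{B}\vu\|_2\ge\tfrac12\sqrt{m}$ with probability at least $1-2\exp(-cm/K^4)$.

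\emph{The crux is a sub‑gaussian increment bound for $Y$}, namely $\|Y_{\vu}-Y_{\vu'}\|_{\psi_2}\le CK^2\|\vu-\vu'\|_2$ for all $\vu,\vu'\in\S^{n+m-1}$. As both points are unit vectors, $Y_{\vu}-Y_{\vu'}=\|\bm{B}\vu\|_2-\|\bm{B}\vu'\|_2=(\|\bm{B}\vu\|_2^2-\|\bm{B}\vu'\|_2^2)/(\|\bm{B}\vu\|_2+\|\bm{B}\vu'\|_2)$, and expanding the squared norms while using the sphere identity $\|\vb\|_2^2-\|\vb'\|_2^2=\|\va'\|_2^2-\|\va\|_2^2$ makes the ``$m\|\vb\|_2^2$'' terms cancel against ``$m\|\va\|_2^2$'' terms, leaving
\[
\|\bm{B}\vu\|_2^2-\|\bm{B}\vu'\|_2^2 = N_1+N_2,
\]
where $N_1:=(\|\mA\va\|_2^2-m\|\va\|_2^2)-(\|\mA\va'\|_2^2-m\|\va'\|_2^2)$ and $N_2:=2\sqrt{m}\bigl(\ip{\mA\va}{\vb}-\ip{\mA\va'}{\vb'}\bigr)$. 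Now $N_1=\sum_{i}\bigl(\ip{\mA_i}{\va-\va'}\ip{\mA_i}{\va+\va'}-\ip{\va-\va'}{\va+\va'}\bigr)$ is a sum of $m$ independent centered sub‑exponential terms of $\psi_1$‑norm $\lesssim K^2\|\vu-\vu'\|_2$, hence by Bernstein's inequality is sub‑gamma with variance proxy $\lesssim mK^4\|\vu-\vu'\|_2^2$ and scale $\lesssim K^2\|\vu-\vu'\|_2$; and $N_2=2\sqrt{m}\bigl(\ip{\mA(\va-\va')}{\vb}+\ip{\mA\va'}{\vb-\vb'}\bigr)$ is, coordinate by coordinate, a sum of independent centered sub‑gaussians, so $\|N_2\|_{\psi_2}\lesssim\sqrt{m}\,K\|\vu-\vu'\|_2$. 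On the event $\{\|\bm{B}\vu\|_2\ge\tfrac12\sqrt{m}\}$ of probability at least $1-2\exp(-cm/K^4)$ the denominator is $\ge\tfrac12\sqrt{m}$, so dividing by it turns $N_1+N_2$ into a sub‑gamma deviation with variance proxy $\lesssim K^4\|\vu-\vu'\|_2^2$. The delicate point is that $N_1$ is genuinely sub‑exponential rather than sub‑gaussian, so this tail has the required Gaussian shape only for deviation levels $t\lesssim K^2\|\vu-\vu'\|_2\sqrt{m}$; for larger $t$ one supplements it with the crude bounds $|Y_{\vu}-Y_{\vu'}|\le\|\bm{B}(\vu-\vu')\|_2$ (whose fluctuation about $\sqrt{m}\|\vu-\vu'\|_2$ again has $\psi_2$‑norm $\lesssim K^2\|\vu-\vu'\|_2$) and $|Y_{\vu}-Y_{\vu'}|\le|Y_{\vu}|+|Y_{\vu'}|$, together with the $2\exp(-cm/K^4)$ probability of the complementary event, and verifies case by case that the total still beats $2\exp(-ct^2/(K^4\|\vu-\vu'\|_2^2))$ (this is where $K\ge 1$ is used).

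With the increment bound in hand, the centered process $(Y_{\vu})_{\vu\in S}$ has sub‑gaussian increments with parameter $CK^2$ in the Euclidean metric, so the tail form of Talagrand's comparison inequality (generic chaining) gives $\sup_{\vu,\vu'\in S}|Y_{\vu}-Y_{\vu'}|\le CK^2\bigl[\gamma_2(S,\|\cdot\|_2)+t\,\diam(S)\bigr]$ with probability at least $1-e^{-t^2}$, where $\gamma_2$ denotes Talagrand's functional; the majorizing‑measure theorem gives $\gamma_2(S,\|\cdot\|_2)\lesssim w(S)\le\gamma(S)$, and $\diam(S)\le 2$. Writing $\sup_{\vu\in S}|Y_{\vu}|\le|Y_{\vu_0}|+\sup_{\vu,\vu'\in S}|Y_{\vu}-Y_{\vu'}|$ for a fixed $\vu_0\in S$, combining with $\|Y_{\vu_0}\|_{\psi_2}\le CK^2$, and noting that $\gamma(S)\ge\E|g|=\sqrt{2/\pi}$ for $g\sim\NN(0,1)$ whenever $S\neq\varnothing$ (because $S$ consists of unit vectors) --- so that the anchor term is absorbed into $CK^2\gamma(S)$ --- then completes the proof after adjusting absolute constants. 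I expect the increment bound, and within it the case analysis that tames the non‑isotropic, sub‑exponential contribution $N_1$, to be the only genuine obstacle; everything else uses only standard concentration of the Euclidean norm and an off‑the‑shelf chaining bound.
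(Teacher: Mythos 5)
The paper itself never proves Proposition~\ref{lm:MatrixDeviationInequality}; it is imported from \cite{chen2017Matrix}, and your route --- pass to the augmented process $Y_{\vu}=\|\mA\va+\sqrt{m}\vb\|_2-\sqrt{m}$ on $S=\mathcal{T}\cap\S^{n+m-1}$, prove a sub-Gaussian increment bound $\|Y_{\vu}-Y_{\vu'}\|_{\psi_2}\le CK^2\|\vu-\vu'\|_2$ on the sphere, then apply the tail form of Talagrand's comparison inequality with $\gamma_2\lesssim \omega\le\gamma$, $\diam(S)\le 2$, and an anchor point absorbed via $\gamma(S)\ge\sqrt{2/\pi}$ --- is exactly the route of that reference. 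The decomposition $N_1+N_2$ after the sphere cancellation, the Bernstein estimate for $N_1$, the $\psi_2$ estimate for $N_2$, and the pointwise bound $\|Y_{\vu}\|_{\psi_2}\le CK^2$ are all correct (and $K\ge 1$ is automatic from isotropy, so no replacement of $K$ is needed).

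There is, however, a concrete hole in where you place the crossover of the case analysis. You run the conditioning argument (denominator $\ge\tfrac12\sqrt{m}$) for all deviation levels $\epsilon\lesssim K^2\sqrt{m}\,d$, with $d=\|\vu-\vu'\|_2$, and switch to the crude bound only above that. But the binding constraint is not the sub-exponential branch of $N_1$ (which indeed keeps Gaussian shape up to $\epsilon\asymp K^2\sqrt{m}\,d$); it is the complementary-event term $2\exp(-cm/K^4)$: for it to beat the target tail $2\exp(-c\epsilon^2/(K^4d^2))$ you need $\epsilon^2/(K^4d^2)\lesssim m/K^4$, i.e.\ $\epsilon\lesssim\sqrt{m}\,d$. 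In the window $\sqrt{m}\,d\lesssim\epsilon\lesssim K^2\sqrt{m}\,d$ (relevant when $K$ is large and $d$ small; the fallback $|Y_{\vu}|+|Y_{\vu'}|$ only helps when $d\asymp 1$) your combination yields only $\exp(-cm/K^4)\approx\exp(-c\epsilon^2/(K^8d^2))$, i.e.\ an increment constant of order $K^4$ and hence a final bound $CK^4[\gamma(S)+t]$, weaker than claimed. The fix lies entirely within your own toolkit: move the split down to $\epsilon\asymp\sqrt{m}\,d$. For $\epsilon\ge C_0\sqrt{m}\,d$ the crude bound already suffices, since $\epsilon-\sqrt{m}\,d\ge(1-1/C_0)\epsilon$ and the fluctuation of $\|\mA(\va-\va')+\sqrt{m}(\vb-\vb')\|_2$ about $\sqrt{m}\,d$ has $\psi_2\lesssim K^2d$, giving $2\exp(-c\epsilon^2/(K^4d^2))$; for $\epsilon\le C_0\sqrt{m}\,d$ all three terms (the Bernstein branch of $N_1$, the $N_2$ tail, and $2\exp(-cm/K^4)$) do dominate the target. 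With the threshold repositioned, your argument closes and coincides with the cited proof.
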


\section{Main Results}
\label{main results bounded noise}

In this section, we present our main results. We use $c, C, C', C'',$ and $\epsilon$ to denote generic absolute constants.
\subsection{Recovery via Constrained Optimization}
We start with analyzing the constrained convex recovery procedures \eqref{Constrained_Optimization_I} and \eqref{Constrained_Optimization_II}.
Our first result shows that, with high probability, approximately
\begin{align}\label{NumberofMeasurements11}
CK^4\omega^2(\TT_f(\vx^{\star})\cap\S^{n-1}) + CK^4\omega^2(\TT_g(\vv^{\star})\cap\S^{m-1})
\end{align}
corrupted measurements suffice to recover $(\vx^{\star}, \vv^{\star})$ exactly in the absence of noise and stably in the presence of noise, via either of the procedures \eqref{Constrained_Optimization_I} or \eqref{Constrained_Optimization_II}.

Before stating our result, we need to define the error set
\begin{multline*}
	\EE_1(\vx^{\star},\vv^{\star}):=\{(\va,\vb)\in \R^n\times\R^m: \\ f(\vx^{\star}+\va)\leq f(\vx^{\star})  \text{ and } g(\vv^{\star}+\vb)\leq g(\vv^{\star}) \},
\end{multline*}
in which the error vector $(\hat{\vx}-\vx^{\star},\hat{\vv}-\vv^{\star})$ lives. By the convexity of $f$ and $g$, $\EE_1(\vx^{\star},\vv^{\star})$ belongs to the following convex cone
\begin{multline*}
\CC_1(\vx^{\star},\vv^{\star}):=\{(\va,\vb)\in \R^n\times\R^m: \langle \va, \vu\rangle \leq 0\\
\text{ and } \langle \vb, \vs\rangle \leq 0 \text{ for any }\vu\in\partial f(\vx^{\star})\text{ and }\vs\in \partial g(\vv^{\star}) \},
\end{multline*}
which is equivalent to
\begin{equation*}
      \{(\va,\vb)\in \R^n\times\R^m: \va \in \mathcal{T}_f(\vx^{\star})  \text{ and } \vb \in \mathcal{T}_g(\vv^{\star}) \}.
\end{equation*}
Then we have the following results.
\begin{theorem}[Constrained Recovery]
	\label{them: Constrained Recovery}
	Let $(\hat{\vx}, \hat{\vv})$ be the solution to either of the constrained optimization problems \eqref{Constrained_Optimization_I} or \eqref{Constrained_Optimization_II}. If the number of measurements
	\begin{align}\label{NumberofMeasurements1}
	\sqrt{m} \geq  CK^2 \gamma( \CC_1(\vx^{\star},\vv^{\star})\cap\S^{n+m-1} ) + \epsilon,
	\end{align}
	then
	\begin{align*}
	\sqrt{\|\hat{\vx}-\vx^{\star}\|_2^2+\|\hat{\vv}-\vv^{\star}\|_2^2}\leq \frac{2\delta\sqrt{m}}{\epsilon}
	\end{align*}
	with probability at least $1-\exp\{-\gamma^2(\CC_1\cap\S^{n+m-1})\}$.
\end{theorem}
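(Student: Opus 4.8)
The plan is to combine the optimality of the minimizer with a uniform lower bound on $\|\bm{\Phi}\va+\vb\|_2$ over the descent cone $\CC_1(\vx^{\star},\vv^{\star})$, the latter supplied by Proposition~\ref{lm:MatrixDeviationInequality}.

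\textbf{Step 1 (localizing the error).} The pair $(\vx^{\star},\vv^{\star})$ is feasible for both \eqref{Constrained_Optimization_I} and \eqref{Constrained_Optimization_II}, since $\vy-\bm{\Phi}\vx^{\star}-\vv^{\star}=\vz$ has $\ell_2$-norm at most $\delta$. Hence optimality of $(\hat{\vx},\hat{\vv})$ forces $f(\hat{\vx})\le f(\vx^{\star})$ in the first problem, respectively $g(\hat{\vv})\le g(\vv^{\star})$ in the second, while feasibility of $(\hat{\vx},\hat{\vv})$ supplies the complementary inequality; in either case both $f(\hat{\vx})\le f(\vx^{\star})$ and $g(\hat{\vv})\le g(\vv^{\star})$ hold. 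Writing $\va=\hat{\vx}-\vx^{\star}$ and $\vb=\hat{\vv}-\vv^{\star}$, this says $(\va,\vb)\in\EE_1(\vx^{\star},\vv^{\star})\subseteq\CC_1(\vx^{\star},\vv^{\star})$, the inclusion being the subgradient inequality applied to $f$ at $\vx^{\star}$ and to $g$ at $\vv^{\star}$. Moreover, since both $(\hat{\vx},\hat{\vv})$ and $(\vx^{\star},\vv^{\star})$ satisfy the $\ell_2$-constraint, the triangle inequality gives $\|\bm{\Phi}\va+\vb\|_2\le 2\delta$.

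\textbf{Step 2 (uniform lower bound).} Write $\bm{\Phi}=\mA/\sqrt m$, so that the rows of $\mA$ are independent, centered, isotropic, and sub-Gaussian with $\|\mA_i\|_{\psi_2}\le K$, and $\sqrt m\,\|\bm{\Phi}\va+\vb\|_2=\|\mA\va+\sqrt m\,\vb\|_2$. Apply Proposition~\ref{lm:MatrixDeviationInequality} to $\TT=\CC_1(\vx^{\star},\vv^{\star})$ (only $\TT\cap\S^{n+m-1}$ enters) with the choice $t=\gamma(\CC_1\cap\S^{n+m-1})$: with probability at least $1-\exp\{-\gamma^2(\CC_1\cap\S^{n+m-1})\}$ one has $\|\mA\va'+\sqrt m\,\vb'\|_2\ge \sqrt m-2CK^2\gamma(\CC_1\cap\S^{n+m-1})\ge\epsilon$ for every $(\va',\vb')\in\CC_1\cap\S^{n+m-1}$, where the last inequality uses the measurement condition \eqref{NumberofMeasurements1} after absorbing the factor $2$ into the generic constant $C$.

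\textbf{Step 3 (conclusion).} If $(\va,\vb)=(\vzero,\vzero)$ the claimed bound is trivial; otherwise set $r=\sqrt{\|\va\|_2^2+\|\vb\|_2^2}$ and $(\va',\vb')=(\va,\vb)/r\in\CC_1\cap\S^{n+m-1}$. On the event of Step~2,
\[
2\delta\;\ge\;\|\bm{\Phi}\va+\vb\|_2\;=\;\frac{r}{\sqrt m}\,\big\|\mA\va'+\sqrt m\,\vb'\big\|_2\;\ge\;\frac{r\,\epsilon}{\sqrt m},
\]
which rearranges to $\sqrt{\|\hat{\vx}-\vx^{\star}\|_2^2+\|\hat{\vv}-\vv^{\star}\|_2^2}=r\le 2\delta\sqrt m/\epsilon$. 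The analytic steps are routine; the only delicate points are bookkeeping — rescaling $\bm{\Phi}$ to match the $\|\mA\va+\sqrt m\,\vb\|_2$ normalization in Proposition~\ref{lm:MatrixDeviationInequality}, choosing $t$ so that the failure probability is exactly $\exp\{-\gamma^2(\CC_1\cap\S^{n+m-1})\}$, and tracking the absolute constants through \eqref{NumberofMeasurements1}.
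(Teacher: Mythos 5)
Your proposal is correct and follows essentially the same route as the paper's proof: localize the error $(\hat{\vx}-\vx^{\star},\hat{\vv}-\vv^{\star})$ in the cone $\CC_1(\vx^{\star},\vv^{\star})$ via optimality/feasibility, obtain the uniform lower bound from Proposition~\ref{lm:MatrixDeviationInequality} together with \eqref{NumberofMeasurements1}, and combine it with the triangle-inequality bound $\|\bm{\Phi}(\hat{\vx}-\vx^{\star})+(\hat{\vv}-\vv^{\star})\|_2\le 2\delta$. You simply make explicit some details the paper leaves implicit (the rescaling $\mA=\sqrt{m}\,\bm{\Phi}$, the choice $t=\gamma(\CC_1\cap\S^{n+m-1})$, and the normalization step), which is fine.
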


\begin{proof}
    Since $(\hat{\vx}, \hat{\vv})$ solves \eqref{Constrained_Optimization_I} or \eqref{Constrained_Optimization_II}, we have $f(\hat{\vx}) \leq f(\vx^{\star})$ and $g(\hat{\vv}) \leq g(\vv^{\star})$. This implies $(\hat{\vx}-\vx^{\star},\hat{\vv}-\vv^{\star})\in\EE_1(\vx^{\star},\vv^{\star}) \subset  \CC_1(\vx^{\star},\vv^{\star})$.
    It then follows from Proposition \ref{lm:MatrixDeviationInequality} and \eqref{NumberofMeasurements1} that the event
	\begin{multline}\label{LowerBound1}
	\min_{(\va,\vb)\in \CC_1(\vx^{\star},\vv^{\star})\cap\S^{n+m-1} }\sqrt{m}\|\bm{\Phi}\va+\vb\|_2 \\ \geq \sqrt{m}-CK^2 \gamma( \CC_1(\vx^{\star},\vv^{\star})\cap\S^{n+m-1} ) \geq \epsilon
	\end{multline}
    holds with probability at least $1-\exp\{-\gamma^2( \CC_1(\vx^{\star},\vv^{\star})\cap\S^{n+m-1} )\}$.

    On the other hand, since both $(\hat{\vx},\hat{\vv})$ and $(\vx^{\star},\vv^{\star})$ are feasible, by triangle inequality, we have
	\begin{multline} \label{prf them1 : step2}
	\|\bm{\Phi}(\hat{\vx}-\vx^{\star})+(\hat{\vv}-\vv^{\star})\|_2 \\ \leq \|\vy-\bm{\Phi}\hat{\vx}-\hat{\vv}\|_2 + \|\vy-\bm{\Phi}\vx^{\star}-\vv^{\star}\|_2\leq 2\delta.
	\end{multline}

    Combining \eqref{LowerBound1} and \eqref{prf them1 : step2} completes the proof.
\end{proof}

To obtain interpretable sample size bound \eqref{NumberofMeasurements11} in terms of familiar parameters, it is necessary to bound $\gamma(\CC_{1}(\vx^{\star}, \vv^{\star})\cap\S^{n+m-1})$.
\begin{lemma}\label{BoundofGaussianComplexity1}
	The Gaussian complexity of $\CC_{1}(\vx^{\star}, \vv^{\star})\cap\S^{n+m-1}$ satisfies
	\begin{multline*}
	\gamma(\CC_{1}(\vx^{\star}, \vv^{\star})\cap\S^{n+m-1}) \\ \leq  2\big[\gw{\TT_f(\vx^{\star})\cap\S^{n-1}} + \gw{\TT_g(\vx^{\star})\cap\S^{m-1}} + 1\big].
	\end{multline*}
\end{lemma}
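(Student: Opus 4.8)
The plan is to bound the Gaussian complexity of the spherical slice of the product cone $\CC_1 = \TT_f(\vx^\star)\times\TT_g(\vv^\star)$ by relating it to the Gaussian widths of the two factor cones. First I would pass from the Gaussian complexity to a Gaussian width via the relation \eqref{Relation}: since $\CC_1\cap\S^{n+m-1}$ is a nonempty subset of the unit sphere in $\R^{n+m}$, picking any unit vector $\vu$ in it gives $\gamma(\CC_1\cap\S^{n+m-1}) \le 2\,\gw{\CC_1\cap\S^{n+m-1}} + 1$. So it suffices to show $\gw{\CC_1\cap\S^{n+m-1}} \le \gw{\TT_f(\vx^\star)\cap\S^{n-1}} + \gw{\TT_g(\vv^\star)\cap\S^{m-1}}$.

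For that bound I would write a Gaussian vector $\vg\in\R^{n+m}$ as $(\vg_1,\vg_2)$ with $\vg_1\sim\NN(0,\mI_n)$ and $\vg_2\sim\NN(0,\mI_m)$ independent, and for $(\va,\vb)\in\CC_1\cap\S^{n+m-1}$ split the inner product as $\ip{\vg_1}{\va}+\ip{\vg_2}{\vb}$. The key observation is that the constraint on the sphere only couples the two blocks through $\|\va\|_2^2+\|\vb\|_2^2=1$; since $\TT_f$ and $\TT_g$ are cones, relaxing to $\|\va\|_2\le 1$ and $\|\vb\|_2\le 1$ separately only enlarges the feasible set, giving
\begin{align*}
\sup_{(\va,\vb)\in\CC_1\cap\S^{n+m-1}}\ip{\vg}{(\va,\vb)}
&\le \sup_{\substack{\va\in\TT_f(\vx^\star),\,\|\va\|_2\le 1}}\ip{\vg_1}{\va}
 + \sup_{\substack{\vb\in\TT_g(\vv^\star),\,\|\vb\|_2\le 1}}\ip{\vg_2}{\vb}.
\end{align*}
Taking expectations and using independence, the right-hand side becomes $\gw{\TT_f(\vx^\star)\cap\B^n} + \gw{\TT_g(\vv^\star)\cap\B^m}$ where $\B$ denotes the unit ball. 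Finally, for a cone $\TT$, the Gaussian width of $\TT\cap\B$ is comparable to (in fact at most a constant times, and for our purposes bounded by) the Gaussian width of $\TT\cap\S$; the standard fact here is $\gw{\TT\cap\B}\le \gw{\TT\cap\S}$ when $0\in\TT$ by projecting onto the sphere, or one absorbs the difference into the additive constants. Combining these pieces yields the claimed inequality.

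The main obstacle is the bookkeeping around the sphere-versus-ball constraint: the set $\CC_1\cap\S^{n+m-1}$ is genuinely not a product, so one must justify the relaxation step carefully, and one must be a little careful that $\gw{\TT_f\cap\B^n}$ is controlled by $\gw{\TT_f\cap\S^{n-1}}$ rather than the other way around. Everything else — the passage from $\gamma$ to $\gw$, the splitting of the Gaussian vector, and linearity of expectation — is routine. I would also note that the lemma as stated writes $\TT_g(\vx^\star)$, which should read $\TT_g(\vv^\star)$; this is immaterial to the argument.
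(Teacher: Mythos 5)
Your overall strategy (split the Gaussian vector over the two blocks, decouple the two cones, then invoke \eqref{Relation}) is essentially the paper's, but your last step contains a genuine error: the claimed ``standard fact'' $\gw{\TT\cap\B^n}\le\gw{\TT\cap\S^{n-1}}$ for a cone $\TT\ni\vzero$ is false, and the projection-onto-the-sphere argument does not prove it. For a cone, $\sup_{\vx\in\TT\cap\B^n}\ip{\vg}{\vx}=\max\{0,\;\sup_{\vu\in\TT\cap\S^{n-1}}\ip{\vg}{\vu}\}$ because the origin is feasible in the ball section; whenever the spherical supremum is negative, the ball value strictly exceeds it (for a single ray, $\gw{\TT\cap\S^{n-1}}=0$ while $\gw{\TT\cap\B^n}=1/\sqrt{2\pi}$). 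The universal inequality therefore goes the other way, $\gw{\TT\cap\S^{n-1}}\le\gw{\TT\cap\B^n}$, and the reverse only holds up to an additive constant: $\gw{\TT\cap\B^n}\le\gw{\TT\cap\S^{n-1}}+\E\,(\ip{\vg}{\vu_0})_-\le\gw{\TT\cap\S^{n-1}}+1/\sqrt{2\pi}$ for any fixed $\vu_0\in\TT\cap\S^{n-1}$. Your fallback of ``absorbing the difference into the additive constants'' does repair the logic, but then your chain yields $\gamma(\CC_1\cap\S^{n+m-1})\le 2\big[\gw{\TT_f\cap\S^{n-1}}+\gw{\TT_g\cap\S^{m-1}}\big]+1+4/\sqrt{2\pi}$, which is slightly weaker than the stated bound $2\big[\gw{\TT_f\cap\S^{n-1}}+\gw{\TT_g\cap\S^{m-1}}+1\big]$ — harmless for the downstream sample-complexity estimate, but not the lemma as written, and in the proposal you did not actually carry out this correction.

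The ball detour is avoidable, and this is where the paper's proof differs from yours in order of operations: it keeps absolute values (i.e., works with Gaussian complexity) while splitting, and only applies \eqref{Relation} at the very end, factor by factor. Concretely, write any $(\va,\vb)\in\CC_1\cap\S^{n+m-1}$ as $(c\,\hat{\va},\sqrt{1-c^2}\,\hat{\vb})$ with $c\in[0,1]$, $\hat{\va}\in\TT_f(\vx^{\star})\cap\S^{n-1}$, $\hat{\vb}\in\TT_g(\vv^{\star})\cap\S^{m-1}$; then $|\ip{\vg}{\va}+\ip{\vh}{\vb}|\le c\,|\ip{\vg}{\hat{\va}}|+\sqrt{1-c^2}\,|\ip{\vh}{\hat{\vb}}|\le|\ip{\vg}{\hat{\va}}|+|\ip{\vh}{\hat{\vb}}|$, where the nonnegativity of the absolute values is exactly what legitimizes dropping $c$ and $\sqrt{1-c^2}$ — the step that breaks in your no-absolute-value version when a supremum is negative. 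Taking suprema and expectations gives $\gamma(\CC_1\cap\S^{n+m-1})\le\gamma(\TT_f\cap\S^{n-1})+\gamma(\TT_g\cap\S^{m-1})$, and applying \eqref{Relation} to each spherical factor (any point has unit norm) gives exactly $2\big[\gw{\TT_f\cap\S^{n-1}}+\gw{\TT_g\cap\S^{m-1}}+1\big]$. You are right that $\TT_g(\vx^{\star})$ in the statement is a typo for $\TT_g(\vv^{\star})$.
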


\begin{proof}
	\begin{align*}
	\gamma(& \CC_{1}(\vx^{\star}, \vv^{\star})\cap\S^{n+m-1}) \\
      &=\E\sup_{(\va,\vb)\in \CC_{1}(\vx^{\star}, \vv^{\star})\cap\S^{n+m-1} } \big|\ip{\vg}{\va} + \ip{\vh}{\vb} \big|\\
	&\leq \E\sup_{ \substack{ c\in(0,1)\\\va\in\TT_f(\vx^{\star})\cap\S^{n-1}\\\vb\in\TT_g(\vv^{\star})\cap\S^{m-1}} }c \cdot\left| \ip{\vg}{\va}\right| + \sqrt{1-c^2} \left|\ip{\vh}{\vb} \right|\\
	&\leq \E\sup_{\va\in\TT_f(\vx^{\star})\cap\S^{n-1}} \left| \ip{\vg}{\va} \right| + \E\sup_{\vb\in\TT_g(\vv^{\star})\cap\S^{m-1}} \left| \ip{\vh}{\vb} \right|\\
	&=\gamma(\TT_f(\vx^{\star})\cap\S^{n-1}) + \gamma( \TT_g(\vv^{\star})\cap\S^{m-1} )\\
    &\leq 2\big[\gw{\TT_f\cap\S^{n-1}} + \gw{\TT_g\cap\S^{m-1}} + 1\big].
	\end{align*}
The last inequality follows from \eqref{Relation}.
\end{proof}
Clearly, \eqref{NumberofMeasurements11} follows from Theorem \ref{them: Constrained Recovery} and Lemma \ref{BoundofGaussianComplexity1}.

\subsection{Recovery via Partially Penalized Optimization}
We next present performance analysis for the partially penalized optimization problem \eqref{Partially_Penalized_Optimization}. Let $\lambda = \lambda_2 / \lambda_1$, $\lambda_1$ and $\lambda_2$ are absolute constants. Our second result shows that, with high probability, approximately
\begin{align}\label{NumberofMeasurements22}
CK^4\eta^2(\lambda_1\cdot\partial f(\vx^{\star})) + CK^4\eta^2(\lambda_2\cdot\partial g(\vv^{\star}))
\end{align}
corrupted measurements suffice to recover $(\vx^{\star}, \vv^{\star})$ exactly in the absence of noise and stably in the presence of noise, via the procedure \eqref{Partially_Penalized_Optimization}.

In this case, we define the following error set
\begin{multline*}
\EE_2(\vx^{\star},\vv^{\star}):=\{(\va,\vb)\in\R^n\times\R^m: \\ f(\vx^{\star}+\va)+\lambda\cdot g(\vv^{\star}+\vb)\leq f(\vx^{\star})+\lambda \cdot g(\vv^{\star}) \}.
\end{multline*}
By the convexity of $f$ and $g$, $\EE_2(\vx^{\star},\vv^{\star})$ belongs to the following convex cone
\begin{multline*}
\CC_2(\vx^{\star},\vv^{\star}):=\{(\va,\vb)\in \R^n\times\R^m:  \langle \va, \vu \rangle
+ \lambda \langle \vb, \vs \rangle \leq 0 \\\text{ for any }\vu\in \partial f(\vx^{\star})\text{ and } \vs\in \partial g(\vv^{\star}) \}.
\end{multline*}
Then we have the following results.

\begin{theorem}[Partially Penalized Recovery]
	\label{them: Partially_Penalized_Recovery}
	Let $(\hat{\vx}, \hat{\vv})$ be the solution to the partially penalized optimization problem \eqref{Partially_Penalized_Optimization}. If the number of measurements
	\begin{align}\label{NumberofMeasurements2}
	\sqrt{m} \geq  CK^2 \gamma( \CC_2(\vx^{\star},\vv^{\star})\cap\S^{n+m-1} ) + \epsilon,
	\end{align}
	then
	\begin{align*}
	\sqrt{\|\hat{\vx}-\vx^{\star}\|_2^2+\|\hat{\vv}-\vv^{\star}\|_2^2}\leq \frac{2\delta\sqrt{m}}{\epsilon}
	\end{align*}
	with probability at least $1-\exp\{-\gamma^2(\CC_2\cap\S^{n+m-1})\}$.
\end{theorem}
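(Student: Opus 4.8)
The plan is to follow the proof of Theorem~\ref{them: Constrained Recovery} almost verbatim, changing only the optimality step to account for the penalized objective. First I would exploit feasibility and optimality of $(\hat{\vx},\hat{\vv})$ in \eqref{Partially_Penalized_Optimization}: since $(\vx^{\star},\vv^{\star})$ is itself feasible, optimality gives $f(\hat{\vx})+\lambda\cdot g(\hat{\vv})\le f(\vx^{\star})+\lambda\cdot g(\vv^{\star})$, so the error pair $(\hat{\vx}-\vx^{\star},\hat{\vv}-\vv^{\star})$ lies in $\EE_2(\vx^{\star},\vv^{\star})$; and by convexity of $f$ and $g$ (every subgradient at $(\vx^{\star},\vv^{\star})$ yields a linear inequality valid on the descent set) we get $\EE_2(\vx^{\star},\vv^{\star})\subset\CC_2(\vx^{\star},\vv^{\star})$, hence $(\hat{\vx}-\vx^{\star},\hat{\vv}-\vv^{\star})\in\CC_2(\vx^{\star},\vv^{\star})$.

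Next I would apply Proposition~\ref{lm:MatrixDeviationInequality} to the rescaled matrix $\sqrt{m}\,\bm{\Phi}$, whose rows are centered, isotropic, and sub-Gaussian with $\psi_2$-norm at most $K$ by \eqref{model: subgaus property}, taking $\mathcal{T}=\CC_2(\vx^{\star},\vv^{\star})$ and $t$ a fixed multiple of $\gamma(\CC_2\cap\S^{n+m-1})$. This yields, on an event of probability at least $1-\exp\{-\gamma^2(\CC_2\cap\S^{n+m-1})\}$, the lower bound
\[
\min_{(\va,\vb)\in \CC_2(\vx^{\star},\vv^{\star})\cap\S^{n+m-1}}\sqrt{m}\,\|\bm{\Phi}\va+\vb\|_2 \;\ge\; \sqrt{m}-CK^2\gamma(\CC_2(\vx^{\star},\vv^{\star})\cap\S^{n+m-1}) \;\ge\; \epsilon,
\]
where the last inequality is precisely the sample-size hypothesis \eqref{NumberofMeasurements2}. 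In parallel, since both $(\hat{\vx},\hat{\vv})$ and $(\vx^{\star},\vv^{\star})$ satisfy the constraint $\|\vy-\bm{\Phi}\vx-\vv\|_2\le\delta$, the triangle inequality gives $\|\bm{\Phi}(\hat{\vx}-\vx^{\star})+(\hat{\vv}-\vv^{\star})\|_2\le 2\delta$, exactly as in \eqref{prf them1 : step2}.

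Finally I would combine the two displays. If $(\hat{\vx}-\vx^{\star},\hat{\vv}-\vv^{\star})=\vzero$ the bound is trivial; otherwise, since $\CC_2(\vx^{\star},\vv^{\star})$ is a cone we may normalize the error pair onto $\CC_2\cap\S^{n+m-1}$, apply the lower bound, and rearrange to get $\epsilon\cdot\sqrt{\|\hat{\vx}-\vx^{\star}\|_2^2+\|\hat{\vv}-\vv^{\star}\|_2^2}\le \sqrt{m}\,\|\bm{\Phi}(\hat{\vx}-\vx^{\star})+(\hat{\vv}-\vv^{\star})\|_2\le 2\delta\sqrt{m}$, which is the claimed estimate. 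Structurally this argument is routine given Proposition~\ref{lm:MatrixDeviationInequality}; the only point requiring a little care is the containment $\EE_2\subset\CC_2$. The genuinely substantive work is not in this theorem but in the companion estimate (analogous to Lemma~\ref{BoundofGaussianComplexity1}) bounding $\gamma(\CC_2(\vx^{\star},\vv^{\star})\cap\S^{n+m-1})$ by the Gaussian distance quantities $\eta(\lambda_1\cdot\partial f(\vx^{\star}))$ and $\eta(\lambda_2\cdot\partial g(\vv^{\star}))$ that appear in \eqref{NumberofMeasurements22}; that polarity computation is where the real content lies, and I expect it, rather than the theorem above, to be the main obstacle.
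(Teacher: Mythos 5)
Your proposal is correct and follows essentially the same route as the paper: the paper's proof of Theorem~\ref{them: Partially_Penalized_Recovery} is explicitly "similar to that of Theorem \ref{them: Constrained Recovery}," namely optimality and feasibility place the error in $\EE_2\subset\CC_2$, Proposition~\ref{lm:MatrixDeviationInequality} (applied to the rescaled matrix $\sqrt{m}\,\bm{\Phi}$) gives the restricted lower bound $\geq\epsilon$ on the cone, and the constraint gives the $2\delta$ upper bound, exactly as you argue. Your closing remark is also consistent with the paper, which defers the substantive work to Lemma~\ref{BoundofGaussianComplexity2} bounding $\gamma(\CC_2\cap\S^{n+m-1})$ by $\sqrt{\eta_f^2+\eta_g^2}$.
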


\begin{proof}
The proof is similar to that of Theorem \ref{them: Constrained Recovery}.
%
%
\end{proof}

Let $\eta^2_f = \eta^2(\lambda_1\cdot\partial f(\vx^{\star}))$ and $\eta^2_g = \eta^2(\lambda_2\cdot\partial g(\vv^{\star}))$. We can bound $\gamma(\CC_{2}(\vx^{\star}, \vv^{\star})\cap\S^{n+m-1})$ as follows.
\begin{lemma}\label{BoundofGaussianComplexity2}
	The Gaussian complexity of $\CC_{2}(\vx^{\star}, \vv^{\star})\cap\S^{n+m-1}$ satisfies
	\begin{align*}
	\gamma(\CC_{2}(\vx^{\star}, \vv^{\star})\cap\S^{n+m-1}) \leq  2\sqrt{\eta^2_f + \eta^2_g} +1.
	\end{align*}
\end{lemma}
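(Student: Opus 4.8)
The plan is to mirror the argument used for Lemma \ref{BoundofGaussianComplexity1}, but now the coupling between the $\va$-block and the $\vb$-block in $\CC_2$ forces us to work with a single polar/distance characterization rather than a product of two tangent cones. First I would recall the well-known duality between the tangent cone and the subdifferential: $\mathcal{T}_f(\vx^{\star})$ is (the closure of) the polar of $\cone(\partial f(\vx^{\star}))$, so that for any Gaussian vector the support function of $\mathcal{T}_f(\vx^{\star})\cap\B^n$ is controlled by the distance to the cone generated by $\lambda_1\cdot\partial f(\vx^{\star})$; this is exactly what makes $\eta^2_f = \eta^2(\lambda_1\cdot\partial f(\vx^{\star}))$ the right quantity. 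The key point is that $\CC_2(\vx^{\star},\vv^{\star})$ is precisely the polar cone of $\cone\{(\vu,\lambda\vs): \vu\in\partial f(\vx^{\star}),\,\vs\in\partial g(\vv^{\star})\}$ in $\R^n\times\R^m$, and the Minkowski sum structure means this cone's polar can be analyzed blockwise.

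The main computation I would carry out is the chain
\begin{align*}
\gamma(\CC_2\cap\S^{n+m-1})
&= \E\sup_{(\va,\vb)\in\CC_2\cap\S^{n+m-1}}\big|\ip{\vg}{\va}+\ip{\vh}{\vb}\big|\\
&\leq \E\,\dist\big((\vg,\vh),\,\cone\{(\vu,\lambda\vs)\}\big) + 1,
\end{align*}
using the standard fact (e.g.\ from the Gordon/Chandrasekaran--Recht--Parrilo--Willsky circle of ideas) that the Gaussian complexity of a convex cone intersected with the sphere is bounded by the expected distance of a Gaussian vector to the polar cone, plus a $+1$ slack to pass from the polar cone to its intersection with the ball and to absorb the absolute value. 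Then I would bound that distance by choosing, for each realization of $(\vg,\vh)$, the best rescaling: write the projection as $(t_1\vu^\star, t_2\lambda\vs^\star)$ and observe
\begin{equation*}
\dist\big((\vg,\vh),\cone\{(\vu,\lambda\vs)\}\big)
\leq \sqrt{\dist^2(\vg,\cone(\partial f(\vx^{\star}))) + \dist^2(\vh,\cone(\lambda\cdot\partial g(\vv^{\star})))},
\end{equation*}
since the two blocks decouple once we drop the sphere constraint and only require membership in the product cone. Taking expectations, using $\E\sqrt{X+Y}\leq\sqrt{\E X+\E Y}$ by Jensen, and identifying $\E\dist^2(\vg,\cone(\lambda_1\cdot\partial f(\vx^{\star})))=\eta_f^2$ and likewise $\eta_g^2$ (rescaling $\partial f$ by $\lambda_1$ and $\partial g$ by $\lambda_2$ so that $\lambda=\lambda_2/\lambda_1$ is absorbed), yields $\gamma(\CC_2\cap\S^{n+m-1})\leq 2\sqrt{\eta_f^2+\eta_g^2}+1$, where the factor $2$ comes from the $|\cdot|$ versus one-sided supremum (cf.\ \eqref{Relation}).

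The step I expect to be the main obstacle is making the passage from $\CC_2\cap\S^{n+m-1}$ to the distance functional fully rigorous: one must correctly handle the intersection with the sphere versus the ball (the homogeneity argument that lets us replace $\S^{n+m-1}$ by $\B^{n+m}$ at the cost of the additive constant), justify the block-decoupling of the distance to the Minkowski-sum cone, and track the rescaling by $\lambda_1,\lambda_2$ carefully so that the cross term with $\lambda$ in the definition of $\CC_2$ matches $\eta^2(\lambda_1\cdot\partial f(\vx^{\star}))$ and $\eta^2(\lambda_2\cdot\partial g(\vv^{\star}))$ exactly. Everything else is a routine application of Jensen's inequality and the polarity identity between tangent cones and subdifferentials already implicit in the definitions given in the Preliminaries.
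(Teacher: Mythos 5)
Your overall route (bound the width of $\CC_2\cap\S^{n+m-1}$ by an expected distance, then Jensen and \eqref{Relation}) is in spirit the same as the paper's, which does it directly: for $(\va,\vb)\in\CC_2$ one has $\ip{\va}{\lambda_1\vu}+\ip{\vb}{\lambda_2\vs}\le 0$, hence $\ip{\vg}{\va}+\ip{\vh}{\vb}\le\ip{\va}{\vg-\lambda_1\vu}+\ip{\vb}{\vh-\lambda_2\vs}\le\|\va\|_2\,d_f+\|\vb\|_2\,d_g$ by Cauchy--Schwarz, then $\le\sqrt{d_f^2+d_g^2}$ on the sphere, Jensen, and \eqref{Relation}. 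However, your decoupling step contains a genuine error. The cone you correctly identify as the prepolar of $\CC_2$, namely $\cone\{(\vu,\lambda\vs):\vu\in\partial f(\vx^{\star}),\,\vs\in\partial g(\vv^{\star})\}$, forces a \emph{common} scaling on the two blocks; it is a strict subset of the product cone $\cone(\partial f(\vx^{\star}))\times\cone(\lambda\,\partial g(\vv^{\star}))$ in general (e.g.\ when both subdifferentials are singletons the former is a ray and the latter a two-dimensional cone). Since distances decrease when the set grows, your claimed inequality
\begin{equation*}
\dist\bigl((\vg,\vh),\cone\{(\vu,\lambda\vs)\}\bigr)\le\sqrt{\dist^2(\vg,\cone(\partial f(\vx^{\star})))+\dist^2(\vh,\cone(\lambda\,\partial g(\vv^{\star})))}
\end{equation*}
is exactly backwards as justified: the right-hand side is the distance to the \emph{larger} product cone, which lower-bounds, not upper-bounds, the left-hand side. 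Note also that the polar of the product cone is $\TT_f(\vx^{\star})\times\TT_g(\vv^{\star})=\CC_1$, not $\CC_2$, so replacing the coupled cone by the product cone would prove the wrong lemma. A secondary slip: $\eta_f^2$ is defined as $\E\dist^2(\vg,\lambda_1\cdot\partial f(\vx^{\star}))$, the distance to the scaled subdifferential \emph{set}, not to its conic hull, so your "identification" at the end is not an identity.

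The repair is short and brings you back to the paper's computation: instead of decoupling through the product cone, choose the specific common scaling $t=\lambda_1$, so that $(\lambda_1\vu,\lambda_2\vs)=\lambda_1(\vu,\lambda\vs)$ lies in your coupled cone for every $\vu\in\partial f(\vx^{\star})$, $\vs\in\partial g(\vv^{\star})$. Taking $\vu,\vs$ to be the nearest points gives directly
\begin{equation*}
\dist\bigl((\vg,\vh),\cone\{(\vu,\lambda\vs)\}\bigr)\le\sqrt{\dist^2(\vg,\lambda_1\cdot\partial f(\vx^{\star}))+\dist^2(\vh,\lambda_2\cdot\partial g(\vv^{\star}))}=\sqrt{d_f^2+d_g^2},
\end{equation*}
after which Jensen yields $\sqrt{\eta_f^2+\eta_g^2}$ and \eqref{Relation} supplies the factor $2$ and the $+1$ (be careful not to add the $+1$ twice, as your first display already inserts one before the width-to-complexity conversion).
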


\begin{proof}
	For any point $(\va,\vb)\in\CC_2(\vx^{\star},\vv^{\star})$, we have
	\begin{align*}
	\langle \va, \vu \rangle + \lambda \langle \vb, \vs \rangle \leq 0
	\end{align*}
	for any $\vu\in\partial f(\vx^{\star})$ and $\vs\in\partial g(\vv^{\star})$.
	Multiplying both sides by $\lambda_1$ yields
	\begin{align*}
	\langle \va, \lambda_1\vu\rangle +  \langle \vb, \lambda_2\vs\rangle \leq 0.
	\end{align*}
    For any $\vg \in \R^n$ and $\vh \in \R^m$, by Cauchy-Schwarz inequality, we have
	\begin{align*}
	\ip{\va}{\vg} + \ip{\vb}{\vh}  &\leq \ip{\va}{\vg-\lambda_1 \vu} + \ip{\vb}{\vh - \lambda_2 \vs }\\
	&\leq \|\va\|_2\|\vg-\lambda_1 \vu \|_2 + \|\vb\|_2\|\vh - \lambda_2 \vs\|_2.
	\end{align*}
	Choosing suitable $\vu\in\partial f(\vx^{\star})$ and $\vs\in \partial g(\vv^{\star})$ such that
	$$\|\vg-\lambda_1\cdot \vu\|_2 = \dist(\vg,\lambda_1\cdot\partial f(\vx^{\star})) $$
	and
	$$\|\vh-\lambda_2\cdot \vs\|_2 = \dist(\vh,\lambda_2\cdot\partial g(\vv^{\star})),$$
    we obtain
    \begin{align}\label{GaussianWidthBound}
	&\ip{\va}{\vg} + \ip{\vb}{\vh} \\ \notag
    &\quad \leq \|\va\|_2\cdot \dist(\vg,\lambda_1\cdot\partial f(\vx^{\star})) + \|\vb\|_2\cdot \dist(\vh,\lambda_2\cdot\partial g(\vv^{\star}))\\ \notag
    &\quad = d_f \cdot \|\va\|_2 + d_g \cdot \|\vb\|_2,
	\end{align}
    where $d_f:= \dist(\vg,\lambda_1 \partial f(\vx^{\star}))$ and $d_g := \dist(\vh,\lambda_2 \partial g(\vv^{\star}))$.

    Therefore,
    \begin{align*}
	&\gw{{\CC_2}(\vx^{\star},\vv^{\star})\cap\S^{n+m-1}} \\
    &\quad =\E\sup_{(\va,\vb)\in \CC_2(\vx^{\star},\vv^{\star})\cap\S^{n+m-1}}\big[ \ip{\vg}{\va}+\ip{\vh}{\vb} \big] \\
	&\quad \leq \E\sup_{(\va,\vb)\in \CC_2(\vx^{\star},\vv^{\star})\cap\S^{n+m-1}}\big[ \|\va\|_2\cdot d_f + \|\vb\|_2\cdot d_g \big]\\
	&\quad \leq \E\sqrt{d_f^2 + d_g^2} \leq \sqrt{\E d_f^2 + \E d_g^2} =\sqrt{\eta^2_f + \eta^2_g}.
	\end{align*}
The second and the third inequalities follow from Cauchy-Schwarz and Jensen's inequalities respectively. By \eqref{Relation}, we complete the proof.
\end{proof}
Thus, \eqref{NumberofMeasurements22} follows from Theorem \ref{them: Partially_Penalized_Recovery} and Lemma \ref{BoundofGaussianComplexity2}.

\subsection{Recovery via Fully Penalized Optimization}
Finally, we analyze the fully penalized optimization problem \eqref{Fully Penalized Optimization}. In this case, we require regularization parameters $\tau_1$ and $\tau_2$ to satisfy the following assumption:
\begin{assumption} \label{Assump: regular}
\begin{align*}
\tau_1\geq \beta f^{\ast}(\bm{\Phi}^{T}\vz)\quad \text{and} \quad \tau_2\geq \beta g^{\ast}(\vz),
\end{align*}
for any $\beta >1$.
\end{assumption}

Our third result shows that, with high probability, approximately
\begin{align}\label{NumberofMeasurements33}
CK^4\left[\sqrt{\eta^2(\tau_1\cdot\partial f(\vx^{\star}))+\eta^2(\tau_2\cdot\partial g(\vv^{\star}))}+\frac{\tau_1\alpha_f+\tau_2\alpha_g}{\beta}\right]^2
\end{align}
corrupted measurements suffice to recover $(\vx^{\star}, \vv^{\star})$ exactly in the absence of noise and stably in the presence of noise, via the procedure \eqref{Fully Penalized Optimization}.

Similarly, define the error set
   \begin{multline*}
    \EE_3(\vx^{\star},\vv^{\star}) := \\ \{(\va,\vb)\in\R^n\times \R^m : \tau_1 f(\vx^{\star}+\va)+\tau_2 g(\vv^{\star}+\vb) \\ \leq \tau_1 f(\vx^{\star})+\tau_2  g(\vv^{\star}) + \frac{1}{\beta}[\tau_1 f(\va)+\tau_2 g(\vb)] \}.
   \end{multline*}
By the convexity of $f$ and $g$, $\EE_3(\vx^{\star},\vv^{\star})$ belongs to the following convex cone
\begin{multline*}
      \CC_3(\vx^{\star},\vv^{\star}):=\{(\va,\vb)\in \R^n\times\R^m: \\ \tau_1 \langle \va, \vu\rangle + \tau_2 \langle \vb, \vs\rangle \leq \frac{1}{\beta}[\tau_1 f(\va) + \tau_2 g(\vb)] \}
\end{multline*}
for and $\vu\in \partial f(\vx^{\star})$ and $\vs\in\partial g(\vv^{\star})$.
Then we have the following result.
\begin{theorem}[Fully Penalized Recovery]
	\label{them: Fully Penalized Recovery}
	Let $(\hat{\vx}, \hat{\vv})$ be the solution to the fully penalized optimization problem \eqref{Fully Penalized Optimization} with $\tau_1$ and $\tau_2$ satisfying Assumption \ref{Assump: regular}. If the number of
measurements
	\begin{align}\label{NumberofMeasurements3}
	\sqrt{m} \geq  CK^2 \gamma( \CC_3(\vx^{\star},\vv^{\star})\cap\S^{n+m-1} ) + \epsilon,
	\end{align}
	then
	\begin{align*}
	\sqrt{\|\hat{\vx}-\vx^{\star}\|_2^2+\|\hat{\vv}-\vv^{\star}\|_2^2}\leq 2m \cdot \frac{\beta+1}{\beta} \cdot \frac{\tau_1 \alpha_f+\tau_2 \alpha_g}{\epsilon^2}
	\end{align*}
	with probability at least $1-\exp\{-\gamma^2(\CC_3\cap\S^{n+m-1})\}$.
\end{theorem}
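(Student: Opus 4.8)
The plan is to mimic the structure of the proof of Theorem~\ref{them: Constrained Recovery}, but to deal carefully with two new features of the fully penalized program: the error vector no longer lies on the sphere intersected with a cone, so I must first rescale, and the objective is a sum of a squared residual and two regularizers, so the optimality condition must be exploited rather than mere feasibility.

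First I would record the descent inequality. Since $(\hat\vx,\hat\vv)$ minimizes the fully penalized objective, comparing its value with that at $(\vx^\star,\vv^\star)$ gives
\begin{equation*}
\tfrac12\|\vy-\bm\Phi\hat\vx-\hat\vv\|_2^2+\tau_1 f(\hat\vx)+\tau_2 g(\hat\vv)\le \tfrac12\|\vz\|_2^2+\tau_1 f(\vx^\star)+\tau_2 g(\vv^\star).
\end{equation*}
Writing $\va=\hat\vx-\vx^\star$, $\vb=\hat\vv-\vv^\star$ and expanding $\vy-\bm\Phi\hat\vx-\hat\vv=\vz-\bm\Phi\va-\vb$, the cross term $\langle \vz,\bm\Phi\va+\vb\rangle$ is controlled via Fenchel-Young by $f^\ast(\bm\Phi^T\vz)f(\va)+g^\ast(\vz)g(\vb)$, and Assumption~\ref{Assump: regular} converts this into $\tfrac1\beta[\tau_1 f(\va)+\tau_2 g(\vb)]$. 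Subadditivity of $f,g$ ($f(\hat\vx)\ge f(\vx^\star)-f(\va)$, etc.) then yields two consequences: $(\va,\vb)\in\EE_3(\vx^\star,\vv^\star)\subset\CC_3(\vx^\star,\vv^\star)$, and the quadratic bound
\begin{equation*}
\tfrac12\|\bm\Phi\va+\vb\|_2^2\le \tfrac{\beta+1}{\beta}\,[\tau_1 f(\va)+\tau_2 g(\vb)].
\end{equation*}

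Next I would apply Proposition~\ref{lm:MatrixDeviationInequality} on the cone $\CC_3(\vx^\star,\vv^\star)$: under \eqref{NumberofMeasurements3}, with probability at least $1-\exp\{-\gamma^2(\CC_3\cap\S^{n+m-1})\}$, every unit vector $(\va',\vb')\in\CC_3\cap\S^{n+m-1}$ satisfies $\sqrt m\,\|\bm\Phi\va'+\vb'\|_2\ge\epsilon$. Homogenizing, for the (nonunit) error $(\va,\vb)\in\CC_3$ this gives
\begin{equation*}
\|\bm\Phi\va+\vb\|_2\ge \frac{\epsilon}{\sqrt m}\sqrt{\|\va\|_2^2+\|\vb\|_2^2}.
\end{equation*}
Combining with the quadratic bound above,
\begin{equation*}
\frac{\epsilon^2}{2m}\,(\|\va\|_2^2+\|\vb\|_2^2)\le \frac{\beta+1}{\beta}\,[\tau_1 f(\va)+\tau_2 g(\vb)].
\end{equation*}
The final step is to remove $f(\va),g(\vb)$ from the right side. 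Using $f(\va)\le f(\va/\|\va\|_2)\cdot\|\va\|_2\le \alpha_f\|\va\|_2$ where $\alpha_f:=\sup_{\va\in\mathcal T_f(\vx^\star)\cap\S^{n-1}}f(\va)$ (and similarly $g(\vb)\le\alpha_g\|\vb\|_2$, valid since $\va\in\mathcal T_f,\vb\in\mathcal T_g$), the right side is at most $\frac{\beta+1}{\beta}(\tau_1\alpha_f+\tau_2\alpha_g)\sqrt{\|\va\|_2^2+\|\vb\|_2^2}$; dividing through by $\sqrt{\|\va\|_2^2+\|\vb\|_2^2}$ gives exactly $\sqrt{\|\va\|_2^2+\|\vb\|_2^2}\le 2m\cdot\frac{\beta+1}{\beta}\cdot\frac{\tau_1\alpha_f+\tau_2\alpha_g}{\epsilon^2}$.

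I expect the main obstacle to be the bookkeeping in the descent inequality: correctly splitting the cross term with the two conjugate functions, invoking Assumption~\ref{Assump: regular} with the right pairing of $\tau_i$ and $\beta$, and checking that the slack term $\tfrac1\beta[\tau_1 f(\va)+\tau_2 g(\vb)]$ is consistently carried so that it simultaneously certifies cone membership and survives into the quadratic bound. The measure-concentration and homogenization steps are routine given Proposition~\ref{lm:MatrixDeviationInequality}, and the passage from $f(\va)$ to $\alpha_f\|\va\|_2$ is the same device used implicitly when one defines $\alpha_f,\alpha_g$ via the tangent cone.
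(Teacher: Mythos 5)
Your argument follows the paper's proof essentially step for step: the optimality (descent) inequality, the expansion of the squared residual, generalized H\"older's inequality together with Assumption~\ref{Assump: regular} to control the cross term, the resulting membership of the error in $\EE_3(\vx^{\star},\vv^{\star})\subset\CC_3(\vx^{\star},\vv^{\star})$, the lower bound from Proposition~\ref{lm:MatrixDeviationInequality} on $\CC_3\cap\S^{n+m-1}$ followed by homogenization, and the quadratic upper bound $\tfrac12\|\bm{\Phi}\va+\vb\|_2^2\le \tfrac{\beta+1}{\beta}\,[\tau_1 f(\va)+\tau_2 g(\vb)]$ are all exactly the paper's steps, and the bookkeeping you worry about is carried out correctly.

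The one flaw is in your final step. You set $\alpha_f:=\sup_{\va\in\TT_f(\vx^{\star})\cap\S^{n-1}}f(\va)$ and justify $f(\va)\le\alpha_f\|\va\|_2$ by asserting that $\va\in\TT_f(\vx^{\star})$ and $\vb\in\TT_g(\vv^{\star})$. That assertion is not available for the fully penalized program: the error is only guaranteed to lie in $\CC_3(\vx^{\star},\vv^{\star})$, a cone that couples $\va$ and $\vb$ through the slack term $\tfrac1\beta[\tau_1 f(\va)+\tau_2 g(\vb)]$ and contains pairs whose components are not individually descent directions of $f$ at $\vx^{\star}$ or of $g$ at $\vv^{\star}$ (this is precisely why $\CC_3$ is defined differently from $\CC_1$). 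The repair is immediate and is what the paper does: take the global compatibility constants $\alpha_f=\sup_{\vu\neq 0}f(\vu)/\|\vu\|_2$ and $\alpha_g=\sup_{\vu\neq 0}g(\vu)/\|\vu\|_2$, for which $f(\va)\le\alpha_f\|\va\|_2$ and $g(\vb)\le\alpha_g\|\vb\|_2$ hold with no cone membership needed; the Cauchy--Schwarz step then gives the stated bound, and these are also the constants the theorem statement refers to.
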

\begin{proof}
	Since $(\hat{\vx},\hat{\vv})$ solves \eqref{Fully Penalized Optimization}, we have
	\begin{multline}\label{OptimalCondition}
	\frac{1}{2}\|\vy-\bm{\Phi}\hat{\vx}-\hat{\vv}\|_2^2 + \tau_1 f(\hat{\vx})+\tau_2 g(\hat{\vv}) \\ \leq \frac{1}{2}\|\vy-\bm{\Phi}\vx^{\star}-\vv^{\star}\|_2^2 +\tau_1 f(\vx^{\star})+\tau_2 g(\vv^{\star}).
	\end{multline}
    Observe that
	\begin{multline*}
	\frac{1}{2}\|\vy-\bm{\Phi}\hat{\vx}-\hat{\vv}\|_2^2  = \frac{1}{2}\|\bm{\Phi}(\hat{\vx}-\vx^{\star})+(\hat{\vv}-\vv^{\star})\|_2^2\\
    +\frac{1}{2}\|\vz\|_2^2  -\ip{\bm{\Phi}(\hat{\vx}-\vx^{\star})}{\vz} -\ip{\hat{\vv}-\vv^{\star}}{\vz}.
	\end{multline*}
    Substituting this into \eqref{OptimalCondition} yields
    \begin{multline}\label{InequalityofTheorem3}
     \frac{1}{2}\|\bm{\Phi}(\hat{\vx}-\vx^{\star})+(\hat{\vv}-\vv^{\star})\|_2^2  \leq \tau_1[f(\vx^{\star})-f(\hat{\vx})]  \\
     +\tau_2 [g(\vv^{\star})-g(\hat{\vv})] + \ip{\bm{\Phi}(\hat{\vx}-\vx^{\star})}{\vz} + \ip{\hat{\vv}-\vv^{\star}}{\vz}.
    \end{multline}
    Since $\|\bm{\Phi}(\hat{\vx}-\vx^{\star})+(\hat{\vv}-\vv^{\star})\|_2^2\geq 0$, we have
    \begin{align*}
    \tau_1 &f(\hat{\vx})+\tau_2 g(\hat{\vv}) \\
                 & \leq \tau_1 f(\vx^{\star}) + \tau_2 g(\vv^{\star}) + \ip{\bm{\Phi}(\hat{\vx}-\vx^{\star})}{\vz} + \ip{\hat{\vv}-\vv^{\star}}{\vz} \\
                 & \leq \tau_1 f(\vx^{\star}) + \tau_2 g(\vv^{\star}) + f^*(\bm{\Phi}^T\vz)\cdot f(\hat{\vx}-\vx^{\star}) \\
                 &~~~~+ g^*(\vz)\cdot g(\hat{\vv}-\vv^{\star}) \\
                 & \leq \tau_1 f(\vx^{\star}) + \tau_2 g(\vv^{\star}) + \frac{\tau_1}{\beta}\cdot f(\hat{\vx}-\vx^{\star}) + \frac{\tau_2}{\beta}\cdot g(\hat{\vv}-\vv^{\star}),
    \end{align*}
    where $f^*(\cdot)$ and $g^*(\cdot)$ denotes the dual norms of $f(\cdot)$ and $g(\cdot)$ respectively. The second inequality follows from generalized H\"{o}lder's inequality. The last inequality holds because of Assumption 1. This implies $(\hat{\vx}-\vx^{\star}, \hat{\vv}-\vv^{\star}) \in \EE_3(\vx^{\star},\vv^{\star}) \subset \CC_3(\vx^{\star},\vv^{\star})$.
    It then follows from Proposition \ref{lm:MatrixDeviationInequality} and \eqref{NumberofMeasurements3} that the event
	\begin{multline}\label{LowerBound3}
	\min_{(\va,\vb)\in \CC_3(\vx^{\star},\vv^{\star})\cap\S^{n+m-1} }\sqrt{m}\|\bm{\Phi}\va+\vb\|_2 \\ \geq \sqrt{m}-CK^2 \gamma( \CC_3(\vx^{\star},\vv^{\star})\cap\S^{n+m-1} ) \geq \epsilon
	\end{multline}
    holds with probability at least $1-\exp\{-\gamma^2( \CC_3(\vx^{\star},\vv^{\star})\cap\S^{n+m-1} )\}$.

    On the other hand, it follows from \eqref{InequalityofTheorem3} that
    \begin{align}\label{UpperBound3}
    &\frac{1}{2}\|\bm{\Phi}(\hat{\vx}-\vx^{\star})+(\hat{\vv}-\vv^{\star})\|_2^2 \\ \notag
    &\quad \leq \frac{\tau_1}{\beta}\cdot f(\hat{\vx}-\vx^{\star}) + \frac{\tau_2}{\beta}\cdot g(\hat{\vv}-\vv^{\star})+\tau_1 \cdot f(\hat{\vx}-\vx^{\star})\\ \notag
    &\quad~~~+\tau_2\cdot g(\hat{\vv}-\vv^{\star})\\ \notag
	&\quad=\frac{\beta+1}{\beta}\big( \tau_1\cdot f(\hat{\vx}-\vx^{\star})+\tau_2\cdot g(\hat{\vv}-\vv^{\star}) \big)\\ \notag
	&\quad= \frac{\beta+1}{\beta}\big( \alpha_f\tau_1\cdot \|\hat{\vx}-\vx^{\star}\|_2+\alpha_g\tau_2\cdot \|\hat{\vv}-\vv^{\star}\|_2 \big)\\ \notag
	&\quad \leq \frac{\beta+1}{\beta}\cdot( \alpha_f\tau_1+ \alpha_g\tau_2 )\cdot\sqrt{\|\hat{\vx}-\vx^{\star}\|_2^2+\|\hat{\vv}-\vv^{\star}\|_2^2},
	\end{align}
    where $\alpha_f=\sup_{\vu\neq 0}\frac{f(\vu)}{\|\vu\|_2} $ and $\alpha_g=\sup_{\vu\neq 0}\frac{g(\vu)}{\|\vu\|_2}$ are compatibility constants. The first inequality follows from triangle inequality. In the last inequality, we have used Cauchy-Schwarz inequality.

    Combining \eqref{LowerBound3} and \eqref{UpperBound3} completes the proof.
\end{proof}
	
To bound the Gaussian complexity of $\CC_{3}(\vx^{\star}, \vv^{\star})\cap\S^{n+m-1}$, we have
\begin{lemma}
	\begin{align*}
	\gamma(\CC_{3}(\vx^{\star}, \vv^{\star})\cap\S^{n+m-1}) \leq  2\left[\sqrt{\eta_f^2+\eta_g^2}+\frac{\tau_1\alpha_f+\tau_2\alpha_g}{\beta}\right] + 1.
	\end{align*}
\end{lemma}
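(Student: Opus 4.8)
The plan is to mimic the proof of Lemma \ref{BoundofGaussianComplexity2}: first bound the Gaussian width $\gw{\CC_3(\vx^{\star},\vv^{\star})\cap\S^{n+m-1}}$, then convert it into a bound on the Gaussian complexity via \eqref{Relation}. The only genuinely new ingredient, relative to Lemma \ref{BoundofGaussianComplexity2}, is that the cone $\CC_3$ carries the extra term $\frac{1}{\beta}[\tau_1 f(\va)+\tau_2 g(\vb)]$ on the right-hand side of its defining inequality; this will be absorbed using the compatibility constants $\alpha_f=\sup_{\vu\neq 0}f(\vu)/\|\vu\|_2$ and $\alpha_g=\sup_{\vu\neq 0}g(\vu)/\|\vu\|_2$ introduced in the proof of Theorem \ref{them: Fully Penalized Recovery}.

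Fix $(\va,\vb)\in\CC_3(\vx^{\star},\vv^{\star})\cap\S^{n+m-1}$, so $\|\va\|_2^2+\|\vb\|_2^2=1$ and in particular $\|\va\|_2\le1$, $\|\vb\|_2\le1$. For any $\vg\in\R^n$, $\vh\in\R^m$ and any $\vu\in\partial f(\vx^{\star})$, $\vs\in\partial g(\vv^{\star})$, I split
\begin{align*}
\ip{\vg}{\va}+\ip{\vh}{\vb} &= \ip{\vg-\tau_1\vu}{\va}+\ip{\vh-\tau_2\vs}{\vb}\\
&\quad+\tau_1\ip{\vu}{\va}+\tau_2\ip{\vs}{\vb}.
\end{align*}
The last two terms are at most $\frac{1}{\beta}[\tau_1 f(\va)+\tau_2 g(\vb)]$ by the definition of $\CC_3$, hence at most $\frac{1}{\beta}(\tau_1\alpha_f\|\va\|_2+\tau_2\alpha_g\|\vb\|_2)\le\frac{\tau_1\alpha_f+\tau_2\alpha_g}{\beta}$, using $\|\va\|_2,\|\vb\|_2\le1$. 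For the first two terms, Cauchy--Schwarz gives the bound $\|\vg-\tau_1\vu\|_2\|\va\|_2+\|\vh-\tau_2\vs\|_2\|\vb\|_2$; choosing $\vu,\vs$ that realize $\|\vg-\tau_1\vu\|_2=\dist(\vg,\tau_1\partial f(\vx^{\star}))=:d_f$ and $\|\vh-\tau_2\vs\|_2=\dist(\vh,\tau_2\partial g(\vv^{\star}))=:d_g$, and then applying Cauchy--Schwarz in $\R^2$ together with $\|\va\|_2^2+\|\vb\|_2^2=1$, I bound this by $\sqrt{d_f^2+d_g^2}$.

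Combining the two estimates, taking the supremum over $(\va,\vb)$, the expectation over $\vg\sim\NN(0,\mI_n)$ and $\vh\sim\NN(0,\mI_m)$, and Jensen's inequality, I get
\begin{align*}
\gw{\CC_3(\vx^{\star},\vv^{\star})\cap\S^{n+m-1}} &\le \E\sqrt{d_f^2+d_g^2}+\frac{\tau_1\alpha_f+\tau_2\alpha_g}{\beta}\\
&\le \sqrt{\eta_f^2+\eta_g^2}+\frac{\tau_1\alpha_f+\tau_2\alpha_g}{\beta},
\end{align*}
where $\eta_f^2=\eta^2(\tau_1\cdot\partial f(\vx^{\star}))$ and $\eta_g^2=\eta^2(\tau_2\cdot\partial g(\vv^{\star}))$. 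Finally, applying \eqref{Relation} with a (unit-norm) point of $\CC_3(\vx^{\star},\vv^{\star})\cap\S^{n+m-1}$ — the set is nonempty since $\CC_3$ is a cone containing the error vector of Theorem \ref{them: Fully Penalized Recovery}, the zero-error case being trivial — turns this width estimate into the claimed complexity bound. I do not foresee a serious obstacle; the only points requiring care are the use of $\|\va\|_2,\|\vb\|_2\le1$ on the sphere when collapsing the $\alpha_f,\alpha_g$ term to an absolute constant, and keeping the Euclidean distances referenced to $\tau_1\partial f(\vx^{\star})$ and $\tau_2\partial g(\vv^{\star})$ rather than to the $\lambda_i$-scaled subdifferentials of Lemma \ref{BoundofGaussianComplexity2}.
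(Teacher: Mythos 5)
Your proof is correct and follows essentially the same route as the paper: split $\ip{\vg}{\va}+\ip{\vh}{\vb}$ using shifted subgradients, control the residual term via the defining inequality of $\CC_3$ and the compatibility constants, bound the distance terms by $\sqrt{d_f^2+d_g^2}$ via Cauchy--Schwarz and Jensen, and finish with \eqref{Relation}. You merely make explicit what the paper compresses into its citation of \eqref{GaussianWidthBound} (including the correct re-referencing of the distances to $\tau_1\partial f(\vx^{\star})$ and $\tau_2\partial g(\vv^{\star})$ and the final application of \eqref{Relation}), so no further changes are needed.
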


\begin{proof}
    By \eqref{GaussianWidthBound}, we obtain
	\begin{align*}
    &\gw{ \CC_3(\vx^{\star},\vv^{\star})\cap\S^{n+m-1}}\\
    &\quad =\E\sup_{(\va,\vb)\in\CC_3(\vx^{\star},\vv^{\star})\cap\S^{n+m-1}}\big[\ip{\vg}{\va}+\ip{\vh}{\vb} \big]\\
	&\quad \leq \E\big[ \|\va\|_2\cdot d_f + \|\vb\|_2\cdot d_g + \frac{1}{\beta} \tau_1\cdot f(\va)+ \frac{1}{\beta}\tau_2\cdot g(\vb) \big]\\
	&\quad \leq \sqrt{\eta_f^2+\eta_g^2}+\frac{\tau_1\alpha_f+\tau_2\alpha_g}{\beta}.
	\end{align*}
\end{proof}
The following Lemma indicates how to choose regularization parameters $\tau_1$ and $\tau_2$ in Assumption \ref{Assump: regular}.

\begin{lemma}
	\label{lm: upper bound of general ip}
Let $\mA$ be an $m \times n$ matrix whose rows $\mA_i$ are independent centered isotropic sub-Gaussian vectors with $\max_{i} \|\mA_i\|_{\psi_2} \leq K$, and $\vw $ be any fixed vector. Let $\TT$ be any bounded subset $\R^n$. Then, for any $t\geq 0$, the event
\begin{align*}
	\sup_{\vu\in \TT} \ip{\mA\vu}{\vw} \leq CK \|\vw\|_2\big[ \gamma(\TT) + t\cdot\rad(\TT) \big]
\end{align*}
holds with probability at least $1-\exp\{-t^2\}$, where $\rad(\TT):=\sup_{\vu\in\TT}\|\vu\|_2$.
\end{lemma}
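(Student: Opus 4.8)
The plan is to reduce the statement to the extended matrix deviation inequality (Proposition~\ref{lm:MatrixDeviationInequality}) by a ``lifting'' trick that embeds the fixed vector $\vw$ into the second block of coordinates. Write $\ip{\mA\vu}{\vw} = \ip{\vu}{\mA^T\vw}$ and note that for a sub-Gaussian matrix $\mA$ the quantity $\mA^T\vw$ concentrates around $0$ with fluctuations controlled by $\|\vw\|_2$; however, rather than analyzing $\mA^T\vw$ directly, I would observe the algebraic identity
\begin{align*}
2\,\ip{\mA\vu}{\vw} = \|\mA\vu+\vw\|_2^2 - \|\mA\vu\|_2^2 - \|\vw\|_2^2,
\end{align*}
and try to control each squared term on the sphere via Proposition~\ref{lm:MatrixDeviationInequality}. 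A cleaner route, which I expect to be the one used, is to consider the augmented set $\mathcal{T}' := \{(\vu, -\vw/\sqrt{m}) : \vu \in \TT\} \subset \R^n \times \R^m$ and apply the deviation inequality to $\|\mA\vu + \sqrt{m}\cdot(-\vw/\sqrt{m})\|_2 = \|\mA\vu - \vw\|_2$, but since the second block is now a single fixed point this does not immediately isolate the inner product either.

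The step I actually expect to be the backbone is a direct Bernstein-type / generic-chaining argument on the process $\vu \mapsto \ip{\mA\vu}{\vw}$. First I would reduce to $\|\vw\|_2 = 1$ by homogeneity. Then for fixed $\vu$, the random variable $\ip{\mA\vu}{\vw} = \sum_i \ip{\mA_i}{\vu}\vw_i$ is a sum of independent centered sub-Gaussian variables; since $\|\ip{\mA_i}{\vu}\|_{\psi_2}\le K\|\vu\|_2$, one gets $\|\ip{\mA\vu}{\vw}\|_{\psi_2}\le CK\|\vu\|_2$, and moreover the increments satisfy $\|\ip{\mA(\vu-\vu')}{\vw}\|_{\psi_2}\le CK\|\vu-\vu'\|_2$, so the process is sub-Gaussian with respect to the Euclidean metric on $\TT$ with parameter $CK$. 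Talagrand's majorizing measure theorem (or the simpler Dudley bound combined with Talagrand's comparison, as used in the proof of the matrix deviation inequality) then yields, for all $t\ge 0$,
\begin{align*}
\sup_{\vu\in\TT}\ip{\mA\vu}{\vw} \le CK\big[\,\gamma(\TT) + t\cdot\rad(\TT)\,\big]
\end{align*}
with probability at least $1-\exp\{-t^2\}$, where the $\gamma(\TT)$ term comes from the expected supremum (bounded by $CK\,\gamma(\TT)$ since the Gaussian complexity governs the sub-Gaussian chaining integral up to absolute constants) and the $t\cdot\rad(\TT)$ term is the deviation tail, whose scale is set by the largest single $\psi_2$-norm $\sup_{\vu\in\TT}\|\ip{\mA\vu}{\vw}\|_{\psi_2}\le CK\,\rad(\TT)$.

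The main obstacle is making the tail term precise: one must show the deviation of the supremum above its mean is genuinely controlled by $\rad(\TT)$ rather than by $\gamma(\TT)$, which requires either Talagrand's concentration inequality for suprema of empirical processes or the ``bounded-difference in $\psi_2$'' version of the chaining tail bound. I would handle this by invoking exactly the same abstract tool used in \cite{chen2017Matrix} to prove Proposition~\ref{lm:MatrixDeviationInequality} --- namely a high-probability bound for suprema of sub-Gaussian processes indexed by $\TT$ with deviation parameter equal to the diameter (here $\le 2\,\rad(\TT)$), absorbing constant factors into $C$. Everything else (homogeneity reduction, the per-point $\psi_2$ estimate, the increment estimate) is routine.
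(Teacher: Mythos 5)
Your backbone argument is exactly the paper's proof: view $\vu\mapsto\ip{\mA\vu}{\vw}$ as a process with sub-Gaussian increments bounded by $CK\|\vw\|_2\|\vu-\vu'\|_2$, apply the high-probability form of Talagrand's majorizing measure theorem (the same tool from \cite{liaw2016simple} underlying Proposition~\ref{lm:MatrixDeviationInequality}), and convert the width and diameter into $\gamma(\TT)$ and $\rad(\TT)$ (the paper does this by adjoining the origin, $\bar{\TT}=\TT\cup\{\vzero\}$, so that $\sup_{\vu}|X_{\vu}|$ is dominated by the increment supremum). The proposal is correct and essentially identical in approach; the preliminary lifting ideas you discard are not needed.
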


\begin{proof}
Define the random process
\begin{align*}
X_{\vu}:=\ip{\mA\vu}{\vw},\text{ for any }\vu \in \TT,
\end{align*}
which has sub-Gaussian increments
\begin{align*}
\|X_{\vu}-X_{\vu'}\|_{\psi_2} & = \|\ip{\mA(\vu-\vu')}{\vw}\|_{\psi_2} \\
                              &\leq  \|\vw\|_2 \|\mA(\vu-\vu')\|_{\psi_2}\\
                              &\leq CK \|\vw\|_2 \|\vu-\vu'\|_2
\end{align*}
for any $\vu,\vu' \in \TT$. The last inequality follows from \cite[Lemma 3.4.3]{vershynin2016book}. Define $\bar{\TT} = \TT \cup \{\vzero\}$. It follows from Talagrand's Majorizing Measure Theorem \cite[Theorem 4.1]{liaw2016simple} that the event
\begin{align*}
\sup_{\vu \in \TT}\ip{\mA\vu}{\vw} &\leq \sup_{\vu \in \TT}|\ip{\mA\vu}{\vw}|= \sup_{\vu \in \bar{\TT}}|\ip{\mA\vu}{\vw}|\\
                                 & = \sup_{\vu \in \bar{\TT}}|\ip{\mA\vu}{\vw} - \ip{\mA\vzero}{\vw}|\\
                                 & \leq \sup_{\vu, \vu' \in \bar{\TT}}|\ip{\mA\vu}{\vw} - \ip{\mA\vu'}{\vw}|\\
                                 & \leq C'K\|\vw\|_2(\omega(\bar{\TT})+t\diam(\bar{\TT}))\\
                                 & \leq C''K \|\vw\|_2(\gamma(\TT)+t\rad(\TT))
\end{align*}
holds with probability at least $1-\exp\{-t^2\}$, where $\diam(\bar{\TT}):=\sup_{\vu,\vs\in\bar{\TT}}\|\vu-\vs\|_2$. In the last inequality, we have used the facts that $\omega(\bar{\TT}) \leq \gamma(\bar{\TT}) = \gamma(\TT)$ and $\diam(\TT) \leq 2\rad(\TT)$. This completes the proof.
\end{proof}

When the noise is bounded $(\|\vz\|_2\leq \delta)$, we have the event
\begin{align*}
	f^*(\bm{\Phi}^T\vz) = \sup_{\vu\in\B_f^n}\ip{\bm{\Phi}\vu}{\vz} \leq \frac{CK\delta}{\sqrt{m}}\big[ \gamma(\B_f^n)+ \sqrt{m} \cdot r_f \big]
\end{align*}
holds with probability at least $1-\exp(m)$, where $\B_f^n=\{ ~\vu\in\R^n : f(\vu)\leq 1~\}$ and $r_f=\sup\{ ~\|\vu\|_2:\vu\in\B_f^n ~ \}$. Thus it is safe to choose $\tau_1 \geq \beta\frac{CK\delta}{\sqrt{m}}\big[ \gw{\B_f^n}+\sqrt{m}\cdot r_f \big]$. In addition, we have $g^*(\vz)=\sup_{\vu\in\B_g^m}\ip{\vz}{\vu}\leq \delta \sup_{\vu\in\B_g^m}\|\vu\|_2=\delta\cdot r_g$, where $\B_g^m=\{ ~\vs\in\R^m : g(\vs)\leq 1~\}$ and $r_g=\sup\{ ~\|\vs\|_2:\vs\in\B_g^m ~ \}$. Therefore, we can choose $\tau_2 \geq \beta \delta\cdot r_g$.

When $\vz$ is a sub-Gaussian random vector such that \eqref{model: noise} holds, then $\|\vz\|_2$ concentrates near the value $\sqrt{m}$ \cite[Theorem 3.1.1]{vershynin2016book}, that is $\|\|\vz\|_2 - \sqrt{m}\|_{\psi_2} \leq CK^2$. This implies
\begin{multline*}
  \Pr{\|\vz\|_2 \geq (L^2+1)\sqrt{m}} \\ \leq \Pr{\big|\|\vz\|_2 - \sqrt{m}\big| \geq L^2\sqrt{m} } \leq 2 e^{-cm}.
\end{multline*}
Combining this with Lemma \ref{lm: upper bound of general ip} and taking union bound yields
\begin{align*}
	f^*(\bm{\Phi}^T\vz) = \sup_{\vu\in\B_f^n}\ip{\bm{\Phi}\vu}{\vz} \leq {CK(1+L^2)}\big[ \gamma(\B_f^n)+ \sqrt{m} \cdot r_f \big]
\end{align*}
with probability at least $1-3e^{-cm}$. Moreover, it is not hard to show the event
\begin{align*}
  g^*(\vz)=\sup_{\vu\in\B_g^m}\ip{\vz}{\vu}\leq {CL}\big[ \gamma(\B_g^m)+ \sqrt{m} \cdot r_g \big]
\end{align*}
holds with probability at least $1-\exp\{-m\}$. In order to satisfy the Assuption \ref{Assump: regular}, we can choose $\tau_1 \geq {CK(1+L^2)\beta}\big[ \gamma(\B_f^n)+ \sqrt{m} \cdot r_f \big]$ and $\tau_2 \geq {CL\beta}\big[ \gamma(\B_g^m)+ \sqrt{m} \cdot r_g \big]$ in the sub-Gaussian noise case.

\section{Conclusion}
In this paper, we have presented performance analysis for three convex recovery procedures which are used to recover a structured signal from corrupted sub-Gaussian measurements. We considered both bounded and stochastic noise cases. Our results have shown that, under mild conditions, these approaches reconstruct both signal and corruption exactly in the absence of noise and stably in the presence of noise. For future work, it would be of great interest to exploit the relationship among these procedures and their phase transition phenomenon \cite{Zhang2017}.

\bibliographystyle{IEEEtran}
\bibliography{IEEEabrv,myref}


\end{document}